\theoremstyle{theorem}
\newtheorem{theorem}{Theorem}
\newtheorem{lemma}{Lemma}
\newtheorem{proposition}{Proposition}
\newtheorem{corollary}{Corollary}
\newtheorem{problem}{Problem}
\theoremstyle{definition}
\newtheorem{definition}{Definition}
\newtheorem{example}[theorem]{Example}
\newcommand{\st}{\, | \,} 
\newcommand{\vect}[1]{\boldsymbol{#1}}
\newcommand{\bfx}{\vect{x}}
\newcommand{\bfy}{\vect{y}}
\newcommand{\bfI}{\vect{I}}
\newcommand{\card}[1]{\left\vert{#1}\right\vert} 
\newcommand{\floor}[1]{\lfloor{#1}\rfloor} 
\newcommand{\ceil}[1]{\left\lceil{#1}\right\rceil} 
\newcommand{\set}[1]{\left\lbrace{#1}\right\rbrace} 
\newcommand{\NN}{\mathbb{N}}
\newcommand{\alphabet}{\mathcal{A}}
\newcommand{\colorsetN}[2]{\mathcal{S}^{(n)}_{#1}(#2)}
\newcommand{\setcoloring}{\mathcal{I}}
\newcommand{\setoutput}{\mathcal{S}}
\newcommand{\entropy}[1]{\mathrm{H}\left(#1\right)}
\newcommand{\tmin}{t_{\min}}
\newcommand{\Tmin}{T_{\min}}
\newcommand{\rateN}{\mathcal{R}_n}
\newcommand{\capacity}{\mathcal{C}}
\definecolor{myteal}{RGB}{0,128,106}
\definecolor{myblue}{RGB}{69, 220, 180}
\definecolor{myred}{RGB}{222, 0, 49}
\definecolor{myyellow}{RGB}{255,170,0}
\colorlet{greyblue}{cyan!70!magenta}
\newcommand{\JB}[1]{\textcolor{greyblue!80!green}{[JB: #1]}}
\newcommand{\ey}[1]{\textcolor{blue}{[EY: #1]}}
\begin{document}
\title{Sequence Reconstruction over Coloring Channels for Protein Identification}


\author{%
  \IEEEauthorblockN{Jessica Bariffi}
  \IEEEauthorblockA{Department of Computer Engineering\\
                    Technical University of Munich\\
                    Munich, Germany\\
                    Email: jessica.bariffi@tum.de}
  \and
  \IEEEauthorblockN{Antonia Wachter-Zeh}
  \IEEEauthorblockA{Department of Computer Engineering\\
                    Technical University of Munich\\
                    Munich, Germany\\
                    Email: antonia.wachter-zeh@tum.de}

    \and
  \IEEEauthorblockN{Eitan Yaakobi}
  \IEEEauthorblockA{Departement of Computer Science\\ 
                    Technion -- Israel Institute of Technology\\
                    Haifa, Israel\\
                    Email: yaakobi@cs.technion.ac.il}
}

\maketitle
\begin{abstract}
   This paper studies the sequence reconstruction problem for a  channel inspired by protein identification. We introduce a \emph{coloring channel}, where a sequence is transmitted through a channel that deletes all symbols not belonging to a fixed subset (the coloring) of the alphabet. By extending this to a \emph{coloring profile}, a tuple of distinct colorings, we analyze the channel’s information rate and capacity. We prove that optimal (i.e., achieving maximum information rate) coloring profiles correspond to $2$-covering designs and identify the minimal covering number required for maximum information rate, as well as the minimum number for which any coloring profile is optimal.
\end{abstract}

\section{Introduction}\label{sec:intro}
Motivated by various applications in bioinformatics, molecular biology, and biochemistry, the large family of sequence reconstruction problems (such as, for instance, \textit{trace reconstruction} \cite{batu2004reconstructing} or \textit{Levenshtein's reconstruction problem}\cite{levenshtein2001efficient,levenshtein2001efficientSUBSUPER}) investigate in the challenge of recovering a sequence from a sufficient number of noisy copies. One of these families of sequence reconstruction considers a channel where the output is not given by a noisy version of the transmitted word, but is given by a list of specific subsequences. The goal there is to uniquely recover the original sequence given the subsequences at the channel output.
Sequence reconstruction problems have been widely studied for various channel models such as substitutions, deletions, insertions, and repetitions \cite{levenshtein2001efficient,levenshtein2001efficientSUBSUPER,abu2021levenshtein,magner2016fundamental,shafir2023sequence,yehezkeally2021uncertainty,sabary2024reconstruction,sala2017exact}. As a result, they have gained significant attention, particularly in the field of DNA sequencing.

While modern DNA sequencing technologies have achieved single-molecule resolution (e.g., in nanopore sequencing), protein sequencing still relies on bulk averaging across multiple cells (see \cite{bekker2017optimized} for more details). Consequently, protein identification remains a challenging task. In~\cite{ohayon2019simulation}, a novel method for single-protein identification using tricolor fluorescence was introduced, with its feasibility analyzed at a computational level. This approach involves labeling three amino acids in the protein chain with corresponding fluorescence markers and then translocating the protein through a nanopore while it is excited by a laser. The resulting output is a tricolored trace chart, which identifies the protein with approximately $96$\% accuracy compared to a database of human proteins.

In this paper, we study this model from an information-theoretical point of view. To model the process of associating a fluorescence marker with a specific protein, we consider an alphabet of a fixed size and subset of that alphabet. We refer to such a subset as a \textit{coloring}. We define an error-free channel based on this coloring, which takes as input a length-$n$ sequence over the alphabet. The channel's output is a subsequence that consists only of entries belonging to the coloring. We refer to this as a coloring channel, which can be interpreted as a deletion-like channel, where all symbols not belonging to the coloring are deleted.
For instance, consider the ternary sequence $\bfx = (012221001)$ and the coloring $I = \set{0,1} \subset \set{0, 1, 2}$. Transmitting the sequence $\bfx$ over the coloring channel with respect to $I$ yields the output sequence $\bfy_{I}(\bfx) = (011001)$.
Extending this idea, we consider a $t$-tuple of distinct colorings of the same size, which we call a \textit{coloring profile}. We define a channel with respect to a coloring profile similarly: the output is a $t$-tuple of subsequences obtained from the input by deleting all entries that do not belong to the colorings in the profile. Figure \ref{fig:diagram_coloringsChannel} illustrates this model given a $t$ coloring profile.
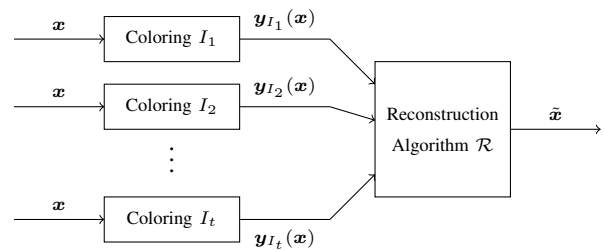
\begin{figure}[h]
    \centering
    \vspace{-2mm}
    \begin{tikzpicture}[scale = 1.2]
        \foreach \i in {2, 3.25, 4}{
            \draw[->] (0,\i) -- (1, \i);
            \node[above] at (0.5, \i) {\scriptsize $\bfx$};
            \draw (1, \i - 0.25) rectangle (2.5, \i + 0.25);
            \draw (2.5, \i) -- (3.5, \i);  
            }
        \node at (1.75, 2) (Chan) {\scriptsize Coloring $I_t$};
        \node at (1.75, 3.25) (Chan) {\scriptsize Coloring $I_2$};
        \node at (1.75, 4) (Chan) {\scriptsize Coloring $I_1$};
        \node at (1.75, 2.75) {$\vdots$};s
        \draw[->] (3.5, 4) -- (4, 3.5);
        \node[above] at (3, 4) {\scriptsize $\bfy_{I_1}(\bfx)$};
        \draw[->] (3.5, 3.25) -- (4, 3.1);
        \node[above] at (3, 3.25) {\scriptsize $\bfy_{I_2}(\bfx)$};
        \draw[->] (3.5, 2) -- (4, 2.5);
        \draw (4, 2.25) rectangle (5.5, 3.75);
        \node[below] at (3, 2) {\scriptsize $\bfy_{I_t}(\bfx)$};
        \node[above] at (4.75, 3) {\scriptsize Reconstruction};
        \node[below] at (4.75, 3) {\scriptsize Algorithm $\mathcal{R}$};
        \draw[->] (5.5,3) -- (6.5, 3);
        \node[above] at (6,3) {\scriptsize $\Tilde{\bfx}$};
    \end{tikzpicture}
    \vspace{-2mm}
    \caption{Illustration of reconstructing a sequence $\bfx$ over $t$ distinct $c$-coloring channels.}
    \label{fig:diagram_coloringsChannel}
\end{figure}
\vspace{-2mm}
The goal under this paradigm is to reconstruct the input sequence given the outputs of the channels. In order to do so, we address three main problems. 
First, for given parameters of the channel model, we are interested in the information rate and capacity. 
Second, we characterize the coloring profiles for which the maximum information rate is achieved. We prove that these coloring profiles are the $2$-covering designs (see \cite{assmus1992designs,gordon1995new,mills1992coverings} for an overview of covering designs). The minimal number for which a $2$-covering design (and hence a coloring profile achieving maximum information rate) exists is referred to as the \textit{minimal covering number}, which has been extensively studied in the past \cite{assmus1992designs,bose1939construction,gordon1995new,mills1992coverings,moore1896tactical,schonheim1964coverings,etzion1995bounds,nurmela1993upper}. In conclusion, we state the minimum size at which any coloring profile can be considered a $2$-covering design.
\section{Problem Statements}\label{sec:problems}
We start by fixing the notation needed throughout the paper. For some $p \in (0, 1)$ we define the binary entropy function as
\begin{align}\label{eq:binary_entropy}
    \entropy{p} := -p\log_2(p) - (1-p)\log_2(1-p).
\end{align}
It is well-known that the binomial coefficient $\binom{n}{k}$, for two nonnegative integers $n, k$ with $n\geq k$, can be bounded as
\begin{align}\label{eq:bounds_binomial-coefficient}
    \frac{1}{n+1}2^{n\entropy{k/n}} \leq \binom{n}{k} \leq 2^{n\entropy{k/n}} .
\end{align}

In the following, for a positive integer $q$, we consider a $q$-ary alphabet which we denote by $\alphabet_q := \set{0, \ldots, q-1}$. We use italic-bold letters for vectors to distinguish them from variables. For a positive integer $c$, we refer to a subset of $\alphabet_q$ of cardinality $c$ as a \emph{$c$-coloring}. Furthermore, we denote by $\setcoloring(q, c)$ the set of all $c$-colorings over $\alphabet_q$.
Notice that $\card{\setcoloring(q, c)} = \binom{q}{c}$.

Given a sequence $\bfx \in \alphabet^n$ and a $c$-coloring $I$, we denote the subsequence of $\bfx$ obtained by deleting all entries that are not contained in $I$ by $\bfy_{I}(\bfx)$ and we call $\bfy_{I}(\bfx)$ the \textit{$I$-colored subsequence of $\bfx$}. 
\begin{example}
    Consider the ternary alphabet $\alphabet_3 = \set{0, 1, 2}$, a sequence $\bfx = (1, 2, 0, 0, 2, 0, 1, 1, 0) \in \alphabet_3^9$ and the $2$-colorings $I_1 = \set{0, 1}$ and $I_2 = \set{1, 2}$. Then the following $I_1$-, respectively, $I_2$-colored subsequences of $\bfx$ are given by
    \begin{align}
        \bfy_{I_1}(\bfx) = (1, 0, 0, 0, 1, 1, 0), \; \text{and} \; \bfy_{I_2}(\bfx) = (1, 2, 2, 1, 1).
    \end{align}
\end{example}
Let $I_1, \ldots, I_t \in \setcoloring(q,c)$ be distinct colorings. We call the $t$-tuple $\bfI := (I_1, \ldots, I_t)$ a \textit{$c$-coloring profile of size $t$} of $\alphabet_q$.
For a given sequence $\bfx \in \alphabet_q^n$ and a $c$-coloring profile $\bfI$ of size $t$, we define the \textit{$\bfI$-colored subsequence of $\bfx$} as 
$$\bfy_{\bfI}(\bfx) := (\bfy_{I_1}(\bfx), \ldots , \bfy_{I_t}(\bfx)).$$
Furthermore, we denote the set of all $\bfI$-colored subsequences as
$$\setoutput_q^{(n)}(\bfI) := \set{ \bfy_{\bfI}(\bfx) \st \bfx \in \alphabet_q^n }.$$
Note that $\setoutput_q^{(n)}(\bfI)$ may contain the empty sequence if none of the entries of $\bfx$ lie in $\bfI$.\\

We now introduce a channel model over a $q$-ary alphabet $\alphabet_q$ based on a fixed $c$-coloring profile $\bfI$ of size $t$ where, for any input $\bfx \in \alphabet_q^n$, the output is defined by the subsequence $\bfy_{\bfI}(\bfx)$ (see Figure \ref{fig:diagram_coloringsChannel}). Hence, this channel model can be seen as a deletion-like channel, where all symbols not lying in the coloring profile $\bfI$ are deleted. Besides the deletions due to the colorings we do not consider any other error in the transmission of the input sequence $\bfx \in \alphabet_q^n$. The goal then is to uniquely reconstruct $\bfx$ from the $\bfI$-colored sequence $\bfy_{\bfI}(\bfx)$. 
If a sequence $\bfx \in \alphabet_q^n$ can be uniquely reconstructed from $\bfy_{\bfI}(\bfx)$, we say that the sequence $\bfx$ is \textit{$\bfI$-reconstructible}. For instance, if $c = q$ and $\bfI = \alphabet_q$, we obtain $\bfy_{\bfI}(\bfx) = \bfx$ for every $\bfx\in\alphabet_q^n$. Thus, every sequence is $\bfI$-reconstructible. On the other hand, if $c = 1$ and $\bfI$ is a $1$-coloring profile of size $t$, we can only recover sequences $\bfx \in I_j^n$ for $j = 1, \ldots, t$. Generally $\bfy_{\bfI}(\bfx)$ only tells us the number of elements in $\bfx$ contained in each of the $I_j$´s, but no information about the order of the entries can be revealed.

Furthermore, we define the joint channel information rate and capacity, respectively, by
\begin{align}\label{eq:infoRate}
    \rateN(q, c, \bfI) &:= \frac{1}{n}\log_q\left(\big|\setoutput_q^{(n)}(\bfI)\big|\right), \\
    \capacity(q, c, \bfI) &:= \limsup_{n \rightarrow \infty} \rateN(q, c, \bfI) .
\end{align}
If $q$ and $c$ are clear from the context, we will write $\rateN(\bfI)$ and $\capacity(\bfI)$. Additionally, since $\big| \setoutput_q^{(n)} (\bfI) \big| \leq \card{\alphabet_q^n }= q^n$ for any $\bfI$, we have that $\rateN(\bfI), \capacity(\bfI) \leq 1$. \\

The rate and the capacity of $t$ joint $c$-coloring channels allow us to understand the reconstruction performance given any input sequence $\bfx \in \alphabet_q^n$. Hence, the first problem we study is understanding the information rate for a given coloring profile.
\begin{problem}\label{problem:info_rate}
    Given a $c$-coloring profile $\bfI = (I_1, \ldots, I_t)$ of length $t$ over a $q$-ary alphabet $\alphabet_q$, and a positive integer $n$, find the information rate $\rateN (\bfI)$ and capacity $\capacity(\bfI)$.
\end{problem}
We study Problem \ref{problem:info_rate} in Section \ref{sec:inforate}, especially focussing on colorings of size $q-1$ and colorings that are mutually disjoint. Since ideally, given a $c$-coloring profile $\bfI$, we would like that almost any $\bfx \in \alphabet_q^n$ is $\bfI$-reconstructible, the second problem we study is the following.
\begin{problem}\label{problem:coloring_achieving_maxCapacity}
    Given an alphabet size $q$ and two positive integers $c$ and $t$. Characterize the $c$-coloring profiles $\bfI = (I_1, \ldots, I_t)$ that achieve maximum information rate $\rateN(\bfI) = 1$, i.e., $\card{\setoutput_q(\bfI)} = \card{\alphabet_q^n} = q^n$, for all $n \in \mathbb{N}$.
\end{problem}
Note that maximum capacity is also reached if a coloring profile $\bfI$ achieves maximum information rate. The contrary, however, is not necessarily true.
Once an answer is found to Problem \ref{problem:coloring_achieving_maxCapacity}, it is interesting to understand the minimum, maximum, or average number of colorings needed to fulfill the characterization properties. That leads to a third and last problem.
\begin{problem}\label{problem:tmin}
    Given an alphabet size $q$ and a positive integer $c \geq 2$. Assume that we draw $t$ distinct colorings $I_1, \ldots, I_t \in \setcoloring(q, c)$ and let $\bfI$ denote the corresponding $c$-coloring profile.
    \begin{enumerate}
        \item Find the smallest number $\tmin(q, c)$ for which it exists a $c$-coloring profile $\bfI$ of size $\tmin(q, c)$ that achieves the maximum information rate?
        \item Find the smallest number $\Tmin(q, c)$ such that any $c$-coloring profile $\bfI$ of size $\Tmin(q,c)$ achieves maximum information rate. 
    \end{enumerate}
\end{problem}
Note that, trivially, if $c = q$ the channel is deletion-free (i.e., $\bfy(\bfx) = \bfx$) and hence $\tmin(q,q) = \Tmin(q,q) = 1$. Additionally, if $c=1$ and $q\geq2$, we can never reconstruct all the sequences over the alphabet $\alphabet_q$. We discuss
Problems \ref{problem:coloring_achieving_maxCapacity} and \ref{problem:tmin} in more detail in Section \ref{sec:max_capacity}.
\section{Information Rates and Capacities}\label{sec:inforate}
In this section, we study the information rate and capacity of coloring channels for given parameters $q, c$, and $t$. That is, we assume that over a $q$-ary alphabet, the size $c$ of the colorings and the number $t$ of $c$-colorings are given, and we discuss Problem \ref{problem:info_rate} for these $c$-coloring channels over $\alphabet_q$ (see \eqref{eq:infoRate} for the definition). That is we compute the information rate and channel capacity given the parameters $q, c$ and $t$, which boils down to computing the size of the set $\setoutput_q^{(n)}(\bfI)$. Recall that for $c = 1$ and any $1$-coloring profile $\bfI = (I_1, \ldots, I_t)$, we obtain $\big| \setoutput_q^{(n)} (\bfI)\big| = \card{\cup_{i = 1}^t\set{i = 1, \ldots , n \st x_i \in I_j, \; \bfx\in\alphabet_q^n}}$. Additionally, for $c = q$ each $c$-coloring profile $\bfI$ of size $t$ is given by $\bfI = (\alphabet_q, \ldots, \alphabet_q)$ and thus $\big| \setoutput_q^{(n)} (\bfI)\big| = \big| \alphabet_q \big| =q^n$.

In the following we will focus only on $c$-colorings with $1<c<q$.
Let us start with the case where we consider one $c$-colorings channel. Equivalently, we can think of a deletion channel where all symbols of exactly $d = q - c$ symbols are removed.
\begin{lemma}\label{lemma:one_(q-1)-coloring}
    Given $I \in \setcoloring(q, c)$, we observe that
    \begin{align}
        \big|\colorsetN{q}{I} \big| = \sum_{i = 0}^n c^{i} = \frac{c^{n+1}-1}{c-1}.
    \end{align}
    In particular, it holds that
    \begin{align}
        \rateN(q, c, I) &= \tfrac{1}{n}\log_q(c^{n+1}-1) + \tfrac{1}{n}\log_q(c - 1),\quad \text{and} \\
        \capacity(q, c, I) &= \log_q(c).
    \end{align}
\end{lemma}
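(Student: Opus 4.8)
The plan is to pin down the output set $\colorsetN{q}{I}$ exactly and then count it. By definition every output $\bfy_I(\bfx)$ is a string over the $c$-element set $I$ whose length is at most $n$, so $\colorsetN{q}{I} \subseteq \bigcup_{i=0}^n I^i$. For the reverse inclusion I would show that every such string is achievable: given any $\bfz = (z_1, \ldots, z_i) \in I^i$ with $0 \le i \le n$, pick a symbol $a \in \alphabet_q \setminus I$, which exists precisely because $c < q$, and pad to $\bfx := (z_1, \ldots, z_i, a, \ldots, a) \in \alphabet_q^n$ using $n - i$ copies of $a$. Deleting every entry outside $I$ then recovers exactly $\bfz$, so $\bfz \in \colorsetN{q}{I}$. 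This yields the identity
\[
    \colorsetN{q}{I} = \bigcup_{i=0}^n I^i .
\]

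Since strings of distinct lengths are themselves distinct, this union is disjoint and $\card{I^i} = c^i$, so $\card{\colorsetN{q}{I}} = \sum_{i=0}^n c^i$; summing the geometric series (valid as $c \ge 2$) gives $\tfrac{c^{n+1}-1}{c-1}$. The rate then follows by applying $\tfrac{1}{n}\log_q(\cdot)$ to this count and splitting the logarithm of the quotient into the logarithm of the numerator minus that of the denominator. For the capacity I would factor the dominant term out of the numerator, writing
\[
    \tfrac{1}{n}\log_q\!\left(c^{n+1}-1\right) = \tfrac{n+1}{n}\log_q(c) + \tfrac{1}{n}\log_q\!\left(1 - c^{-(n+1)}\right),
\]
and let $n \to \infty$: the first term tends to $\log_q(c)$, while the correction term and the $\tfrac{1}{n}\log_q(c-1)$ contribution both vanish, so that $\capacity(q,c,I) = \log_q(c)$.

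The only substantive step is the achievability (reverse-inclusion) direction of the set identity: one must verify that neither a length nor an ordering constraint prevents an arbitrary string over $I$ from being produced at the output. This is exactly where the standing hypothesis $1 < c < q$ enters, since $c < q$ guarantees a padding symbol outside $I$; were $c = q$, the channel would be the identity and the count would degenerate to $q^n$. Everything downstream of the set identity — the geometric summation and the $\limsup$ evaluation — is routine bookkeeping.
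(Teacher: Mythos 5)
Your proof is correct and takes essentially the same approach as the paper's: both identify the output set with the strings of length at most $n$ over the $c$-symbol set $I$ (with $c$ choices per retained position) and sum the resulting geometric series, your padding argument merely making explicit the achievability step that the paper leaves implicit. One remark: your (correct) computation yields $\tfrac{1}{n}\log_q(c^{n+1}-1) - \tfrac{1}{n}\log_q(c-1)$, with a minus sign, which shows that the plus sign in the lemma's stated rate expression is a typo in the paper.
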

\begin{proof}
    Observe that the number of output sequences depends only on the number $i$ of positions that are not equal to the deleted values. For each of these positions, there are $c$ options, and hence the expression for $\big|\colorsetN{q}{I} \big|$ follows. We obtain the information rate and capacity by calculating the geometric sum and suitably bounding it from below and above.
\end{proof}

We now focus on $t \geq 2$ and $c=q-1$.
\begin{proposition}\label{prop:two_(q-1)colorings}
    Let $n$ and $q$ be two positive integers and consider the $q$-ary alphabet $\alphabet_q$. For any two distinct colorings $I_1, I_2 \in \setcoloring(q, q-1)$, it holds that
    \begin{align}
        \big| \colorsetN{q}{I_1, I_2} \big| = \sum_{i = 0}^n \sum_{j = 0}^{n-i} \binom{n-i}{j}\binom{n-j}{i}(q-2)^{n-i-j} .
    \end{align}
\end{proposition}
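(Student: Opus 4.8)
The plan is to exploit the special structure of $(q-1)$-colorings: each such coloring omits exactly one symbol, so I may write $I_1 = \alphabet_q \setminus \{a\}$ and $I_2 = \alphabet_q \setminus \{b\}$ for distinct symbols $a \neq b$. Under this description, $\bfy_{I_1}(\bfx)$ is simply $\bfx$ with every occurrence of $a$ deleted, and $\bfy_{I_2}(\bfx)$ is $\bfx$ with every occurrence of $b$ deleted. I would classify the positions of $\bfx$ into three groups: those carrying $a$ (say $i$ of them), those carrying $b$ ($j$ of them), and those carrying one of the remaining $q-2$ symbols ($k = n-i-j$ of them), and count the output pairs group by group.

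The heart of the argument is a structural bijection. I would show that the output pair $(\bfy_{I_1}(\bfx), \bfy_{I_2}(\bfx))$ is completely determined by, and conversely determines, the following data: the subsequence $\bfz$ of $\bfx$ consisting of the symbols different from both $a$ and $b$ (of length $k$), together with the way the $a$'s are interleaved among the $k+1$ gaps around $\bfz$ and the way the $b$'s are interleaved among those gaps. Indeed, deleting the $b$'s from $\bfy_{I_1}(\bfx)$ recovers $\bfz$ and reading off the $b$-positions recovers the $b$-interleaving; symmetrically $\bfy_{I_2}(\bfx)$ recovers $\bfz$ and the $a$-interleaving; the two recovered copies of $\bfz$ coincide, so the data is consistent, and conversely any such triple reconstructs the pair. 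The only information about $\bfx$ that is lost is the relative order of $a$'s and $b$'s that fall into a common gap, which affects neither output.

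Given the bijection, the count becomes a routine stars-and-bars computation. For fixed $i, j$ and $k = n-i-j$, there are $(q-2)^k$ choices for the length-$k$ subsequence $\bfz$; the number of interleavings of $j$ indistinguishable $b$'s into the $k+1$ gaps is $\binom{k+j}{j} = \binom{n-i}{j}$ (equivalently, the number of ways to place the $b$'s among the $n-i$ positions of $\bfy_{I_1}(\bfx)$); similarly the number of $a$-interleavings is $\binom{k+i}{i} = \binom{n-j}{i}$. These choices are independent once $\bfz$ is fixed, and every resulting triple is realized by some $\bfx \in \alphabet_q^n$, so multiplying and summing over all admissible $i, j$ (i.e.\ $0 \le i \le n$, $0 \le j \le n-i$) yields exactly $\sum_{i=0}^{n}\sum_{j=0}^{n-i}\binom{n-i}{j}\binom{n-j}{i}(q-2)^{n-i-j}$.

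The main obstacle is the structural bijection rather than the arithmetic: I must argue carefully that two sequences produce the same output pair precisely when they share the same $\bfz$, the same $a$-interleaving, and the same $b$-interleaving, so that no output pair is counted twice and none is missed. Verifying that the two reconstructed copies of $\bfz$ always agree, and that distinct values of $(i,j)$ contribute disjoint families of output pairs, is the part that needs the most care; the binomial bookkeeping is then immediate.
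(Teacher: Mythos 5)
Your proof is correct and takes essentially the same approach as the paper's: both classify inputs by the counts $i,j$ of the two deleted symbols and interpret $\binom{n-i}{j}$, $\binom{n-j}{i}$, and $(q-2)^{n-i-j}$ as the independent data determining the output pair. Your explicit gap/interleaving bijection merely spells out the injectivity/surjectivity argument that the paper's proof leaves implicit, so it is a more careful rendering of the same counting idea.
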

\begin{proof}
    Let $a_1, a_2 \in \alphabet_q$ be distinct and assume $I_i = \alphabet_q \setminus \set{a_i}$ for $i = 1, 2$. Given a sequence $\bfx \in \alphabet_q^n$, we consider the output $\bfy_{\bfI}(\bfx) = (\bfy_{I_1}(\bfx), \bfy_{I_2}(\bfx))$, where $\bfy_{I_i}(\bfx)$ is obtained by the entries of $\bfx$ lying in $I_i$. In other words, $\bfy_{I_i}$ is obtained by deleting all entries of $\bfx$ equal to $a_i$. Since $\bfy(\bfx)$ is a concatenation, we essentially consider $d = 2$ deleted symbols ($a_1$ for the first part of $\bfy(\bfx)$ and $a_2$ for the second).
    
    Assume that we consider an input sequence $\bfx \in \alphabet_q^n$ consisting of $i$ entries equal to $a_1$, $j$ entries equal to $a_2$, and $n-i-j$ entries lying in $\alphabet_q\setminus\set{a_1, a_2}$. 
    Since, for $\bfy_{I_1}(\bfx)$, all entries of $\bfx$ that are equal to $a_1$ are deleted, there are exactly
    \begin{align}
        \binom{n-i}{j}
    \end{align}
    possibilities of combining $j$ elements equal to $a_j$ and $n-i-j$ elements in $\alphabet_q\setminus\set{a_1, a_2}$.
    Similarly, for $\bfy(\bfx)_{I_2}$, there are 
    \begin{align}
        \binom{n-j}{i}
    \end{align}
    such possibilities.
    Additionally, for each of the $n-i-j$ elements in $\alphabet_q\setminus\set{a_1, a_2}$ there are $q-d$ options. Hence, summing over all possibilities for $i$ and $j$ we obtain the desired result.
\end{proof}

We can now study the capacity of two distinct $(q-1)$-coloring channels.

\begin{theorem}\label{thm:limit_2col_1del_qary}
    Let $q \geq 3$ and denote $s := \frac{1}{2} - \frac{\sqrt{(q+2)(q-2)}}{2(q+2)}$. For any two distinct colorings $I_1, I_2 \in \setcoloring(q, q-1)$ it holds
    \begin{align}
        \capacity(q, &q-1, (I_1, I_2))\\
        &= \log_q(4) (1-s)\entropy{\frac{s}{1-s}} + (1-2s)\log_q(q-2).
    \end{align}
\end{theorem}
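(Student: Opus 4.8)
The plan is to extract the exponential growth rate of the double sum from Proposition~\ref{prop:two_(q-1)colorings} by a Laplace/saddle-point estimate, and then to carry out the resulting two-variable optimization explicitly.

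First I would normalize the summation indices by writing $i = \alpha n$ and $j = \beta n$, so that $(\alpha,\beta)$ ranges over the simplex $\Delta = \set{(\alpha,\beta) : \alpha,\beta \geq 0,\ \alpha+\beta \leq 1}$. Applying the binomial bounds \eqref{eq:bounds_binomial-coefficient} to both factors $\binom{n-i}{j}$ and $\binom{n-j}{i}$, each summand equals, up to a factor polynomial in $n$, the quantity $2^{n f(\alpha,\beta)}$, where
\[
f(\alpha,\beta) = (1-\alpha)\entropy{\tfrac{\beta}{1-\alpha}} + (1-\beta)\entropy{\tfrac{\alpha}{1-\beta}} + (1-\alpha-\beta)\log_2(q-2).
\]
Since the sum has only $O(n^2)$ terms, the upper bound $\binom{n}{k}\leq 2^{n\entropy{k/n}}$ gives $\big|\colorsetN{q}{I_1,I_2}\big| \leq (n+1)^2\, 2^{n\max_\Delta f}$, while keeping the single term at the maximizer and using $\binom{n}{k}\geq \tfrac{1}{n+1}2^{n\entropy{k/n}}$ gives a matching lower bound $\tfrac{1}{(n+1)^2}2^{n f(\alpha_n,\beta_n)}$ for integer indices with $(\alpha_n,\beta_n)$ tending to the maximizer. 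Taking $\tfrac{1}{n}\log_q(\cdot)$, the polynomial factors vanish and both bounds converge to the same value, so the $\limsup$ is in fact a limit and $\capacity(q,q-1,(I_1,I_2)) = \tfrac{1}{\log_2 q}\max_\Delta f$.

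Next I would locate the maximizer. The function $f$ is symmetric under $(\alpha,\beta)\mapsto(\beta,\alpha)$, reflecting the interchange of $I_1$ and $I_2$. Moreover, expanding $(1-\alpha)\entropy{\beta/(1-\alpha)} = \beta\log_2\tfrac{1-\alpha}{\beta} + (1-\alpha-\beta)\log_2\tfrac{1-\alpha}{1-\alpha-\beta}$ exhibits it as a sum of terms of the form $x\log_2(y/x)$ with $x,y$ affine in $(\alpha,\beta)$; each such term is jointly concave, so $f$ is concave and, being symmetric, attains its maximum on the diagonal $\alpha=\beta=s$. Restricting gives $g(s):= f(s,s) = 2(1-s)\entropy{\frac{s}{1-s}} + (1-2s)\log_2(q-2)$, and $g'(s)=0$ reduces (the $1/\ln 2$ terms cancel) to $(1-2s)^2 = s(1-s)(q-2)$, i.e. the quadratic $(q+2)s^2 - (q+2)s + 1 = 0$. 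Its root in $(0,\tfrac12)$ is precisely $s = \tfrac12 - \frac{\sqrt{(q+2)(q-2)}}{2(q+2)}$, and concavity certifies that $(s,s)$ is the global maximizer. Substituting $g(s)$ back and dividing by $\log_2 q$, using $\log_q 4 = 2/\log_2 q$ and $\log_q(q-2)=\log_2(q-2)/\log_2 q$, yields the claimed closed form.

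The optimization itself is routine; the main obstacle is making the Laplace estimate rigorous. One must confirm that $f$ is continuous at the interior maximizer — guaranteed since $0 < s < \tfrac12$ keeps every argument $s/(1-s)$ and $1-2s$ strictly between $0$ and $1$ — so that the lower-bound indices $(\alpha_n,\beta_n)$ indeed give $f(\alpha_n,\beta_n)\to\max_\Delta f$, and that the $O(n^2)$ polynomial prefactors are genuinely negligible after applying $\tfrac1n\log_q$. Throughout, care is needed to keep the base-$2$ entropy and the base-$q$ logarithm conversions consistent.
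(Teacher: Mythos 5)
Your proposal is correct, and its first half --- normalizing the indices, applying the entropy bounds \eqref{eq:bounds_binomial-coefficient} to both binomial factors, bounding the double sum above by an $O(n^2)$ multiple of its largest term and below by the single term at the maximizer, so that the polynomial prefactors vanish under $\tfrac1n\log_q$ --- is essentially the paper's argument. Where you genuinely depart from the paper is in the two-variable optimization. The paper solves a constrained maximization with Lagrange multipliers, deducing $s=t$ from the symmetry of the two partial derivatives and then solving the resulting quadratic; this locates the stationary point but does not by itself certify global optimality over the simplex. You instead prove that $f$ is jointly concave, by writing each term $(1-\alpha)\entropy{\beta/(1-\alpha)}$ as a sum of perspective-type terms $x\log_2(y/x)$ with $x,y$ affine, so that the symmetry $(\alpha,\beta)\mapsto(\beta,\alpha)$ forces a maximizer onto the diagonal and the problem collapses to one dimension. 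This buys a genuine certificate that the interior critical point is the \emph{global} maximum (boundary cases need no separate treatment), at the cost of invoking joint concavity of $x\log_2(y/x)$. Your stationarity condition $(1-2s)^2=s(1-s)(q-2)$, i.e.\ $(q+2)s^2-(q+2)s+1=0$, is the correct quadratic, its root in $(0,\tfrac12)$ is the stated $s$, and your base conversions $\log_q 4 = 2/\log_2 q$ and $\log_q(q-2)=\log_2(q-2)/\log_2 q$ recover the claimed closed form exactly.
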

In particular, we have that $\lim_{q \rightarrow \infty} \capacity(q, q-1, (I_1, I_2)) = 1$.
\begin{proof}
    Recall from Proposition \ref{prop:two_(q-1)colorings} that we have
    \begin{align}
        \big| \colorsetN{q}{I_1, I_2} \big| = \sum_{i = 0}^{n} \sum_{j = 0}^{n-i} \binom{n-i}{j}\binom{n-j}{i} (q-2)^{n-i-j}.
    \end{align}
    We now consider $i = s_in$ and $j = t_jn$ where $0 < s_i,t_j < 1$. Using the upper bound in \eqref{eq:bounds_binomial-coefficient} yields
    \begin{align}
        \big| \colorsetN{q}{I_1, I_2} \big| 
        \leq 
        \sum_{i = 0}^n \sum_{j = 0}^{n-i}\Big( & 2^{n(1-s_i)\entropy{ \frac{t_j}{1-s_i} } + n(1-t_j)\entropy{ \frac{s_i}{1-t_j} }}\\
        &\quad (q-2)^{n(1-s_i-t_j)}\Big).
    \end{align}
    Let $s, t \in (0, 1)$ denote the values maximizing, for any $i, j$, $$2^{n(1-s_i)\entropy{ \frac{t_j}{1-s_i} } + n(1-t_j)\entropy{ \frac{s_i}{1-t_j} }}(q-2)^{n(1-s_i-t_j)}.$$ This yields the following upper bound  on $\big| \colorsetN{q}{I_1, I_2} \big|$
    \begin{align}
        2^{n(1-s)\entropy{ \frac{t}{1-s} } + n(1-t)\entropy{ \frac{s}{1-t} }}(q-2)^{n-s-t} \tfrac{(n+1)(n+2)}{2} ,
    \end{align}
    and therefore
    \begin{align}\label{eq:prop:two_(q-1)colorings:limit_upper}
        \capacity&(q, q-1, (I_1, I_2))\\
        & \leq
        \log_q(2)\left( (1-s)\entropy{ \tfrac{t}{1-s} } + (1-t)\entropy{ \tfrac{s}{1-t}} \right) \\
        &\quad\quad + (1-s-t)\log_q(q-2).
    \end{align}
    On the other hand, we can lower bound the sum of binomial coefficients by using one term of the double sum only. For instance, the term where $i = \floor{sn}$ and $j = \floor{tn}$. Then we can lower bound $\big| \colorsetN{q}{I_1, I_2} \big|$ by
    \begin{align}\label{eq:prop:two_(q-1)colorings:lowerbound_size}
        \binom{n - \floor{tn}}{\floor{sn}}\binom{n-\floor{sn}}{\floor{tn}}(q-2)^{n- \floor{sn}-\floor{tn}}.
    \end{align}
    Applying the lower bound \eqref{eq:bounds_binomial-coefficient} on the binomial coefficient and the fact that $\floor{tn} \leq tn$ yields
    \begin{align}
        \binom{n - \floor{tn}}{\floor{sn}} 
        &\geq \tfrac{1}{(n+1)}2^{(n-tn)\entropy{\tfrac{\floor{sn}}{n- \floor{tn}}}},
    \end{align}
    where we used that $\floor{x} \leq x$ for any $x \in \mathbb{R}$.
    Similarly, we can show that $\entropy{\tfrac{\floor{sn}}{n- \floor{tn}}}$ is lower bounded by
    \begin{align}
        -\tfrac{s}{1-t}\log_2(\tfrac{s}{1-t}) - (\tfrac{1-s-t+2/n}{1-t})\log_2(\tfrac{1-s-t+2/n}{1-t}).
    \end{align}
    A similar lower bound is found for the coefficient $\binom{n - \floor{sn}}{\floor{tn}}$ and for $\entropy{\tfrac{\floor{tn}}{n- \floor{sn}}}$.
    Using these further lower bounds, we obtain that
    \begin{align}\label{eq:prop:two_(q-1)colorings:limit_lower}
        \capacity&(q, q-1, (I_1, I_2))\\
        &\geq
        \log_q(2)\left( (1-s)\entropy{ \tfrac{t}{1-s} } + (1-t)\entropy{ \tfrac{s}{1-t}} \right) \\
        &\quad\quad + (1-s-t)\log_q(q-2).
    \end{align}
    Combining the bounds in \eqref{eq:prop:two_(q-1)colorings:limit_upper} and \eqref{eq:prop:two_(q-1)colorings:limit_lower} yields the expression for the limit.
    By solving a maximization problem using Lagrange multipliers, we find $s = t = \tfrac{1}{2} - \frac{\sqrt{(q+2)(q-2)}}{2(q+2)}$.
\end{proof}
For simplicity, from now on we exclusively focus on $c$-coloring profiles $\bfI$ of size $t$ whose $c$-colorings are pairwise disjoint. We refer to such a coloring profile as a \textit{disjoint coloring profile}. 

As a first case, we further restrict to a disjoint coloring profile covering the whole alphabet.
\begin{theorem}\label{thm:4ary_disjoint}
    Let $q \geq 4$ and let $t, c$ be two positive integers. Let $\bfI$ be a disjoint $c$-coloring profile of size $t$. The number of $\bfI$-colored sequences is given by
    \begin{align}
        \big| \setoutput_q^{(n)}(\bfI) \big| = c^n\binom{n+t-1}{t-1},
    \end{align}
    if $ct = q$, and 
    \begin{align}
        \big| \setoutput_q^{(n)}(\bfI) \big| = \sum_{i=0}^n c^i\binom{i+t-1}{t-1},
    \end{align}
    if $ct<q$.
    In particular, in both cases, $\capacity(q, c, \bfI) = \log_q(c)$. 
\end{theorem}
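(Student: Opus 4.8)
The plan is to use disjointness to decouple the $t$ colored subsequences, count the outputs by first fixing the lengths of the components and then their contents, and finally extract the capacity from the leading exponential factor.

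First I would record the structural consequence of disjointness: since $I_1, \dots, I_t$ are pairwise disjoint, every entry of an input $\bfx \in \alphabet_q^n$ lies in at most one $I_j$, and entries outside $\bigcup_{j} I_j$ are deleted in every component of $\bfy_{\bfI}(\bfx)$. Hence an output is exactly a $t$-tuple $(\bfy_1, \dots, \bfy_t)$ with $\bfy_j$ an arbitrary word over $I_j$; writing $n_j = |\bfy_j|$ and $i = \sum_{j=1}^t n_j$ for the number of covered entries, the total number of surviving symbols is $i$.

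The heart of the argument is a bijection between outputs and pairs $\big((n_1,\dots,n_t),(\bfy_1,\dots,\bfy_t)\big)$ consisting of a length profile and a tuple of contents with $\bfy_j \in I_j^{n_j}$. Distinctness is clear because the components of $\bfy_{\bfI}(\bfx)$ are stored separately, so distinct pairs give distinct outputs. For achievability I would realize a given pair by the concatenation $\bfx = \bfy_1 \cdots \bfy_t$, padded with $n-i$ copies of a symbol in $\alphabet_q \setminus \bigcup_j I_j$ when $ct < q$; such a symbol exists precisely because $ct < q$, which is what allows $i$ to fall below $n$. For a fixed profile there are $\prod_{j=1}^t c^{n_j} = c^{i}$ choices of contents, and by stars and bars the number of profiles with $\sum_j n_j = i$ is $\binom{i+t-1}{t-1}$. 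If $ct = q$ the colorings partition $\alphabet_q$, so every entry is covered, $i=n$ is forced, and the count is $c^n\binom{n+t-1}{t-1}$; if $ct < q$ then $i$ ranges over $0,\dots,n$, and since outputs with different $i$ have different total length these contributions do not overlap, yielding $\sum_{i=0}^n c^i\binom{i+t-1}{t-1}$.

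For the capacity I would use that $\binom{n+t-1}{t-1}$ is a polynomial in $n$ of degree $t-1$, so $\frac{1}{n}\log_q\binom{n+t-1}{t-1} \to 0$. When $ct = q$ this directly gives $\rateN(\bfI) = \log_q(c) + o(1)$. When $ct < q$ I would sandwich the sum between its largest term and $(n+1)$ times that term: both $c^i$ and $\binom{i+t-1}{t-1}$ increase in $i$, so the maximum occurs at $i=n$, and after applying $\frac{1}{n}\log_q$ the extra $\frac{1}{n}\log_q(n+1)$ vanishes as well. In either case $\capacity(q,c,\bfI) = \log_q(c)$.

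The only real subtlety is the bijection—verifying simultaneously that every admissible (profile, contents) pair is realized by some length-$n$ input and that distinct pairs give distinct outputs; disjointness is exactly what makes the subsequences independent and the concatenation construction valid. As a consistency check, the case $t=1$, $ct<q$ collapses to $\sum_{i=0}^n c^i$, recovering Lemma \ref{lemma:one_(q-1)-coloring}. Everything past the bijection—the product count, stars and bars, and the polynomial-factor estimate—is routine.
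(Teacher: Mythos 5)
Your proof is correct and follows essentially the same route as the paper's: decompose outputs by how many entries fall in each coloring, use disjointness to count $c^i$ contents per length profile and stars and bars to count $\binom{i+t-1}{t-1}$ profiles, then sandwich the sum to get $\capacity(q,c,\bfI)=\log_q(c)$. Your explicit bijection argument (injectivity plus the concatenation-and-padding construction for achievability) is a welcome tightening of a step the paper leaves implicit, but it is the same counting idea, not a different method.
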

\begin{proof}
    We start by considering the case where $tc = q$.
    By assumption, we have $I_1 \sqcup I_2 \sqcup \cdots \sqcup I_t = \alphabet_q$, where $\sqcup$ denotes the disjoint union. Hence, assume that the channel input sequence $\bfx \in \alphabet_q^n$ consists of $i_1$ entries that lie in $I_1$, $i_2$ entries that lie in $I_2$, and so forth. Since all colorings in the profile $\bfI$ are pairwise disjoint, there are $c^{i_j}$ distinct outputs for $\bfy_{I_j}(\bfx)$ for each coloring $I_j$ where $j = 1, \ldots, t$.
    
    Notice that $i_1 + \cdots + i_t = n$. 
    By the ordering of the colorings in the profile $\bfI$,
    the number of possibile outputs is given by 
    \begin{align}
        \setoutput_q^{(n)}(\bfI) = c^n \binom{n+t-1}{t-1}.
    \end{align}
    The limit is easily computed by taking the logarithm base $q$ and dividing by $n$.

    Let now $tc < q$ and assume that an input sequence $\bfx \in \alphabet_q^n$ consists of $i$ entries that are contained in $I_1\sqcup \cdots \sqcup I_t$  and $n-i$ entries in none of the $c$-colorings where $i = 1, \ldots , n$. Then the expression is analogous to the case above considering $tc = i$. Summing over all values of $i$ yields the desired result for the number of $\bfI$-colored sequences.
    To obtain the capacity, we observe that 
    \begin{align}
        c^n \binom{n + t-1}{t-1} \leq \big| \setoutput_q^{(n)}(\bfI)\big| \leq n c^n \binom{n + t-1}{t-1}.
    \end{align}
    Applying the logarithm base $q$, dividing by $n$ and applying the limit gives $\capacity(q, c, \bfI) = \log_q(c)$
\end{proof}
\section{Channels Achieving Maximum Rate}\label{sec:max_capacity}
In this section, we address Problems \ref{problem:coloring_achieving_maxCapacity} and \ref{problem:tmin}. As a first step, we characterize the $c$-coloring channels that achieve the maximum capacity. That is, we focus on $t$ distinct channel outputs of a $c$-coloring channel over $\alphabet_q$ and give a necessary and sufficient condition on the colorings for which any input $\bfx \in \alphabet_q^n$ is $\bfI$-reconstructible. As the cases $c = 1$ and $c = q$ are trivial, we restrict to the cases where $2 \leq c \leq q-1$.
\begin{theorem}\label{thm:covering_pairs}
    Consider a $c$-coloring profile $\bfI$ of $t$ distinct colorings over $\alphabet_q$. Any input sequence $\bfx \in \alphabet_q^n$ is $\bfI$-reconstructible if and only if every pair $(a, b)\in\alphabet_q^2$ is contained in at least one coloring of the profile $\bfI$.
\end{theorem}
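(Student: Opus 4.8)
The plan is to prove the two implications separately, both resting on one simple observation: from the channel output $\bfy_{\bfI}(\bfx)$ one can recover, for every pair $\{a,b\}$ that is covered by some coloring $I_j$ of the profile, the two-symbol subsequence $\bfy_{\{a,b\}}(\bfx)$. Indeed, deleting from $\bfy_{I_j}(\bfx)$ every symbol outside $\{a,b\}$ is the same as deleting from $\bfx$ every symbol outside $\{a,b\}$, since $\{a,b\}\subseteq I_j$; hence $\bfy_{\{a,b\}}(\bfx)$ is a function of $\bfy_{I_j}(\bfx)$, and therefore of $\bfy_{\bfI}(\bfx)$. This reduces the whole question to whether the family of pairwise interleaving patterns $\bfy_{\{a,b\}}(\bfx)$ determines $\bfx$.

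For the ``if'' direction I would show that, under the covering hypothesis, the map $\bfx \mapsto \bfy_{\bfI}(\bfx)$ is injective, which is exactly the statement that every $\bfx \in \alphabet_q^n$ is $\bfI$-reconstructible. Suppose $\bfx \neq \bfx'$ have the same output, and let $p$ be the first index where they differ, say $x_p = a$ and $x'_p = a'$ with $a \neq a'$. Because $\bfx$ and $\bfx'$ agree on all positions before $p$, the two sequences contain the same number $k$ of symbols from $\{a,a'\}$ in positions $1, \dots, p-1$, so $\bfy_{\{a,a'\}}(\bfx)$ and $\bfy_{\{a,a'\}}(\bfx')$ share their first $k$ entries; but their $(k+1)$-st entries are $a$ and $a'$ respectively, hence the two patterns differ. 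By the covering hypothesis some coloring contains $\{a,a'\}$, so by the observation above both patterns are determined by $\bfy_{\bfI}$, contradicting $\bfy_{\bfI}(\bfx) = \bfy_{\bfI}(\bfx')$. Thus no such pair $(\bfx, \bfx')$ exists and the map is injective.

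For the ``only if'' direction I would argue by contraposition. Suppose some pair $\{a,b\}$ with $a \neq b$ lies in no coloring of $\bfI$, and take $n \geq 2$. Let $\bfx$ and $\bfx'$ be two sequences that agree everywhere except in the first two coordinates, where $\bfx$ reads $(a,b)$ and $\bfx'$ reads $(b,a)$. For every coloring $I_j$ at most one of $a, b$ belongs to $I_j$, so in passing to $\bfy_{I_j}$ exactly one of the two swapped symbols survives (or neither), and it occupies the same relative position in both sequences because the remaining coordinates are identical; hence $\bfy_{I_j}(\bfx) = \bfy_{I_j}(\bfx')$ for all $j$. Therefore $\bfy_{\bfI}(\bfx) = \bfy_{\bfI}(\bfx')$ while $\bfx \neq \bfx'$, so $\bfx$ is not $\bfI$-reconstructible.

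The two routine ingredients (tracking which coloring retains which symbol, and checking the swap) are easy; the step I expect to carry the real weight is the ``first-difference'' argument in the ``if'' direction, namely the claim that knowing all pairwise subsequences $\bfy_{\{a,b\}}(\bfx)$ pins down $\bfx$ exactly. Its correctness hinges on the fact that equality of prefixes forces equality of the $\{a,a'\}$-counts in those prefixes, which is what lets one locate the discrepancy at a single, well-defined coordinate of the pairwise pattern; I would double-check this bookkeeping, and confirm that the reduction to pairwise patterns loses no information, before considering the proof complete.
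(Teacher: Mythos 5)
Your proof is correct, and while it establishes the same two implications as the paper, the machinery differs in both directions. For the ``if'' direction the paper argues by induction on $n$: writing $\bfx=(a,\bfx^{(n)})$ and $\bfx'=(a',\bfx'^{(n)})$, it picks a coloring containing both $a$ and $a'$; when $a\neq a'$ the two colored outputs already differ in their first symbol, and when $a=a'$ it strips the common symbol and invokes the induction hypothesis. You replace the induction by a projection lemma --- $\bfy_{\{a,b\}}(\bfx)$ is recoverable from $\bfy_{I_j}(\bfx)$ whenever $\{a,b\}\subseteq I_j$, because deletions compose --- combined with a first-difference argument; the prefix-counting step you flag as the crux is indeed sound: since $\bfx$ and $\bfx'$ agree before position $p$, both pairwise patterns begin with the same $k$ symbols and both have a $(k+1)$-st entry, namely $a$ versus $a'$, so they disagree at a well-defined coordinate. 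This is the same underlying mechanism as the paper's induction (an identical prefix pins the discrepancy to a fixed position of the colored output), but your version is non-inductive and makes explicit that no information is lost in passing to pairwise patterns. For the ``only if'' direction the paper splits into three cases (the pair split across two colorings, one symbol in no coloring, neither symbol in any coloring) with a separate construction for each; your single swap $(a,b,\ldots)\leftrightarrow(b,a,\ldots)$ subsumes all three, since the only property it uses is that no $I_j$ contains both $a$ and $b$. What each route buys: the paper's induction is fully self-contained and needs no auxiliary lemma; yours yields a slightly stronger takeaway (reconstructibility is equivalent to recoverability of all pairwise interleavings, so a decoder can work pair by pair) and a converse free of case analysis. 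One shared, minor caveat: both converses need $n\geq 2$, which is consistent with the quantification over all $n$ in Problem~\ref{problem:coloring_achieving_maxCapacity}.
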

\begin{proof}
    First of all, let us assume that every pair of distinct points in $\alphabet_q$ is incident to at least one $c$-coloring in $\bfI$. This implies that $\alphabet_q = \bigcup_{i = 1}^t I_i$. We show inductively that for any distinct $\bfx, \bfx' \in \alphabet_q^n$ it holds that $\bfy_{\bfI}(\bfx) \neq \bfy_{\bfI}(\bfx')$. 
    Thus, consider the base case $n = 1$. That is, $x, x' \in \alphabet_q$ with $x \neq x'$. Furthermore, there exists $i \in \set{1, \ldots , t}$ such that $x, x' \in I_i$. Since $x$ and $x'$ are distinct, we observe $y_{I_i}(x) \neq y_{I_i}(x')$ and therefore $\bfy_{\bfI}(x) \neq \bfy_{\bfI}(x')$.
    Now assume that the statement holds for any arbitrary but fixed $n\geq 1$ and let $\bfx, \bfx' \in \alphabet_q^{n+1}$ be distinct. Without loss of generality, we assume that $\bfx$ and $\bfx'$ can be written as
    \begin{align}
        \bfx = (a, \bfx^{(n)}), \quad \text{and} \quad \bfx' = (a', \bfx'^{(n)})
    \end{align}
    where $a, a' \in \alphabet_q$ and $\bfx^{(n)}, \bfx'^{(n)} \in \alphabet_q^n$. Note that there is a $c$-coloring $I_i$, for $i = 1, \ldots , t$, such that $a, a' \in I_i$. If $a \neq a'$ it holds $y_{I_i}(a) = a \neq a' = y_{I_i}(a')$ and thus, $\bfy_{\bfI}(\bfx) \neq \bfy_{\bfI}(\bfx')$. On the other hand, if $a = a'$, we must have that $\bfx^{(n)} \neq \bfx'^{(n)}$. By the induction hypothesis it follows that $\bfy_{\bfI}(\bfx^{(n)}) \neq \bfy_{\bfI}(\bfx'^{(n)})$ and therefore also $\bfy_{\bfI}(\bfx) \neq \bfy_{\bfI}(\bfx')$.

    On the other hand, assume that there are two points $a, b \in \alphabet_q$ for which there is no $c$-coloring $I \in \bfI$ containing both elements. This implies the following three scenarios
    \begin{enumerate}
        \item There exist distinct $i, j \in \set{1, \ldots, t}$ such that $a \in I_i$ and $b \in I_j$, but $a \not\in I_j$ and $b \not \in I_i$.
        \item There is no $i \in \set{1, \ldots, t}$ such that $a \in I_i$.
        \item There are no $i,j \in \set{1, \ldots, t}$ such that $a \in I_i$ and $b \in I_j$.
    \end{enumerate}
    In the first case, without loss of generality, say $a \in I_1$, $b \in I_2$, $\bfx = (a, b)$ and $\bfx' = (b, a)$. Then, however, $\bfy_{\bfI}(\bfx) = (a, b) = \bfy_{\bfI}(\bf\bfx')$. Considering the second case, the argument is similar. For instance, if $\bfx$ and $\bfx'$ both consist of $k$ entries equal to $a$ with $a \in I_i$ for some $i \in \set{1, \ldots, t}$ and $n-k$ entries equal in $b$ where $b$ is not contained in any of the $c$-colorings, then $\bfy(\bfx) = \bfy(\bfx')$. The last case is an extension of the second case.
\end{proof}

Coloring profiles satisfying the conditions of Theorem \ref{thm:covering_pairs} in design theory are commonly known as $(q, c, 2)$-\textit{covering designs} which are defined in the following way.
\begin{definition}\label{def:t-design}
    Given a set of points $\mathcal{P}$ of cardinality $q$ and a set of blocks $\mathcal{B}$ consisting of $c$-subsets of $\mathcal{P}$. An incidence structure with respect to $\mathcal{P}$ and $\mathcal{B}$ is called a \textit{$(q, c, 2)$-covering design} if every $2$ distinct points are together incident with at least one block.
\end{definition}
This definition can be extended to general $(\nu, k, \tau)$-covering designs. The smallest number $\tmin(\nu, k, \tau)$ of blocks needed for a $(\nu, k, \tau)$-covering design to exist has been widely studied (see, for instance, \cite{gordon1995new,mills1992coverings,moore1896tactical} for an overview) and computed for many values of $\nu, k$ and $\tau$. Schönheim in \cite{schonheim1964coverings} gave the best lower bound given by
\begin{align}\label{eq:lowerbound_tmin}
    \tmin(\nu, k, \tau) \geq \ceil{\tfrac{\nu}{k}\ceil{\tfrac{\nu-1}{k-1} \cdots \ceil{\tfrac{\nu-\tau+1}{k-\tau+1}}}}.
\end{align}
Additionally, he showed that there exists $(\nu, k, \tau)$-covering design whose size this bound whenever, for any $h = 0, \ldots, \tau-1$, it holds
\begin{align}\label{eq:necessary_tmin}
    \textstyle\binom{\nu-h}{\tau-h} \big/ \binom{k-h}{\tau-l} \in \NN.
\end{align}
In our case, the bound in \eqref{eq:lowerbound_tmin} reduces to
\begin{align}\label{eq:lowerbound_tmin_ourcase}
    \tmin(q, c) \geq \ceil{\tfrac{q}{c}\ceil{\tfrac{q-1}{c-1}}}.
\end{align}
By the necessary condition \eqref{eq:necessary_tmin}, the bound is achieved for $\tmin(q, 2)$, i.e., we have $\tmin(q,2) = q(q-1)/2 = \binom{q}{2}$. Additionally, it was shown that the bound in \eqref{eq:lowerbound_tmin_ourcase} is met for $c = 3$ (\cite{moore1896tactical,reiss1859ueber}) and for $c = 4, 5$ (\cite{hanani1960quadruple}).
Upper bounds were later studied in, \cite{bluskov1998new,etzion1995bounds,gordon1995new,nurmela1993upper}. This covers the first question to Problem \ref{problem:tmin} which we will not further investigate in this paper.\\

Let us now focus on the second question stated in Problem~\ref{problem:tmin}. That is, given a $c$-coloring profile $\bfI$, where each coloring of $\bfI$ is drawn uniformly at random and without replacement from $\setcoloring(q, c)$, what is the maximum number $\Tmin(q,c)$ of draws such that any sequence $\bfx \in \alphabet_q^n$ is $\bfI$-reconstructible? 
Corollary \ref{cor:achieving_2_or_q-1} is an immediate consequence of Theorem \ref{thm:covering_pairs} and \eqref{eq:lowerbound_tmin_ourcase}.
\begin{corollary}\label{cor:achieving_2_or_q-1}
    Given $q \geq 3$, we have
    \begin{align}
        \Tmin(q, 2) &= \tmin(q, 2) = \binom{q}{2}, \quad \text{and} \\
        \Tmin(q, q-1) &= \tmin(q, q-1) = 3.        
    \end{align}
\end{corollary}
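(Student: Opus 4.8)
The plan is to reduce the whole statement to Theorem~\ref{thm:covering_pairs}, which says that a profile $\bfI$ achieves maximum information rate (i.e. $\rateN(\bfI) = 1$ for all $n$) if and only if every pair $(a,b) \in \alphabet_q^2$ lies in a common coloring of $\bfI$, that is, if and only if $\bfI$ is a $(q,c,2)$-covering design. Under this translation, $\tmin(q,c)$ is the smallest size of such a profile, while $\Tmin(q,c)$ is the smallest $t$ for which \emph{every} profile of $t$ distinct colorings already covers all pairs. (Both are well defined, and $\Tmin$ is a genuine threshold: adding a coloring to a covering design keeps it one, so once all size-$t$ profiles cover, so do all larger ones.) I would then treat the two cases separately by analyzing exactly when the covering condition holds.

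For $c = 2$ each $2$-coloring \emph{is} a single pair and thus covers exactly one of the $\binom{q}{2}$ pairs of $\alphabet_q$. Hence a profile covers all pairs iff it lists all $\binom{q}{2}$ pairs, and since $\card{\setcoloring(q,2)} = \binom{q}{2}$ the only such profile is the complete list. This settles both numbers at once: nothing smaller can cover every pair, so $\tmin(q,2) = \binom{q}{2}$ (matching the Schönheim bound \eqref{eq:lowerbound_tmin_ourcase}), and every strictly smaller profile omits some pair $\set{a,b}$, which is then uncovered because no other $2$-coloring contains both $a$ and $b$, so $\Tmin(q,2) = \binom{q}{2}$ as well.

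For $c = q-1$ I would use the complementary description $I = \alphabet_q \setminus \set{e}$, under which a pair $\set{a,b}$ is covered by $I$ precisely when $e \notin \set{a,b}$. Writing $E$ for the set of $t$ distinct removed symbols of a profile, a pair $\set{a,b}$ is uncovered iff $E \subseteq \set{a,b}$. Therefore the profile covers all pairs iff no pair contains $E$, which happens exactly when $\card{E} = t \geq 3$; when $t \leq 2$ the set $E$ (extended to a pair if $\card{E} < 2$) is itself an uncovered pair. Consequently every profile of size $3$ is a covering design and none of size $\leq 2$ is, so $\tmin(q,q-1) = \Tmin(q,q-1) = 3$, where $q \geq 3$ guarantees that three distinct $(q-1)$-colorings exist.

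I do not foresee a serious obstacle, since the corollary is essentially a restatement of the covering-design condition. The only point requiring care is the $\Tmin$ direction, where one must reason about \emph{all} profiles of a given size rather than merely exhibiting one good profile. For $c=2$ this is immediate from $\card{\setcoloring(q,2)} = \binom{q}{2}$, and for $c=q-1$ it follows from the clean dichotomy ``uncovered pair $\iff E \subseteq \set{a,b}$'', which makes the covering property depend only on $t$ and not on which colorings are chosen.
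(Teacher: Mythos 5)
Your proof is correct and follows essentially the same route as the paper: the corollary is stated there as an immediate consequence of Theorem \ref{thm:covering_pairs} together with the Schönheim bound \eqref{eq:lowerbound_tmin_ourcase}, which is exactly your reduction to the covering-design condition. The only difference is that you fill in the elementary details the paper compresses into ``immediate consequence'' --- for $c=2$ the unique covering profile is the complete list of all $\binom{q}{2}$ pairs, and for $c=q-1$ the complement dichotomy (a pair is uncovered iff it contains all removed symbols) shows every profile of three distinct colorings covers all pairs --- which is precisely what makes the $\Tmin$ claims rigorous.
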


With the characterization of Theorem \ref{thm:covering_pairs} we cover the remaining cases $3 \leq c \leq q-2$ in Proposition \ref{prop:t_max}, and complete the answer to the second question of Problem \ref{problem:tmin}.
\begin{proposition}\label{prop:t_max}
    Let $q \geq 4$ and $c\geq 2$ be two positive integers. The minimum number of distinct $c$-coloring channels over a $q$-ary alphabet $\alphabet_q$ that allows to recover any input sequence is 
    \begin{align}\label{eu:prop:t_min}
        \Tmin(q, c) = \binom{q-1}{c} + \binom{q-2}{c-1} + 1 .
    \end{align}
\end{proposition}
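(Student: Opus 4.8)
The plan is to translate the statement entirely into the language of covering designs and then reduce it to a single pigeonhole count. By Theorem \ref{thm:covering_pairs}, a $c$-coloring profile $\bfI$ achieves maximum information rate if and only if every pair in $\alphabet_q^2$ is contained in at least one of its colorings, i.e., if and only if $\bfI$ is a $(q,c,2)$-covering design. Hence $\Tmin(q,c)$ is precisely the smallest $t$ such that \emph{every} family of $t$ distinct $c$-subsets of $\alphabet_q$ covers all pairs; equivalently, $\Tmin(q,c)-1$ is the largest size of a family of distinct $c$-colorings that fails to cover some pair. The one quantity driving everything is, for a fixed pair $\set{a,b}\subseteq\alphabet_q$, the number of $c$-colorings that avoid it (do not contain both $a$ and $b$). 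Since exactly $\binom{q-2}{c-2}$ of the $\binom{q}{c}$ colorings contain both $a$ and $b$, this count is
\begin{align}\label{eq:prop:t_max:count}
    \binom{q}{c} - \binom{q-2}{c-2} = \binom{q-1}{c} + \binom{q-2}{c-1},
\end{align}
where the identity follows from two applications of Pascal's rule. By the symmetry of $\alphabet_q$, this value is the same for every pair.

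For the lower bound $\Tmin(q,c)\geq\binom{q-1}{c}+\binom{q-2}{c-1}+1$, I would fix a single pair $\set{a,b}$ and let $\bfI$ consist of all colorings avoiding it. By \eqref{eq:prop:t_max:count} this profile has size $\binom{q-1}{c}+\binom{q-2}{c-1}$, and by construction the pair $\set{a,b}$ is uncovered, so by Theorem \ref{thm:covering_pairs} it does not achieve maximum rate. This exhibits a non-covering profile one short of the claimed value. For the matching upper bound I would argue by contradiction using pigeonhole: suppose $\bfI$ is any profile of size $t=\binom{q-1}{c}+\binom{q-2}{c-1}+1$ that misses some pair $\set{a,b}$. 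Then all $t$ distinct colorings of $\bfI$ lie among those avoiding $\set{a,b}$, of which by \eqref{eq:prop:t_max:count} there are only $t-1$, a contradiction. Hence every such profile covers all pairs, and combining the two bounds yields \eqref{eu:prop:t_min}.

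The argument is elementary once the reduction to covering designs via Theorem \ref{thm:covering_pairs} is in place; there is no real obstacle beyond correctly identifying the extremal non-covering family (all colorings avoiding one fixed pair) and verifying the binomial identity in \eqref{eq:prop:t_max:count}. The main point worth a sanity check is consistency with Corollary \ref{cor:achieving_2_or_q-1}: substituting $c=2$ into the formula recovers $\Tmin(q,2)=\binom{q}{2}$, and $c=q-1$ recovers $\Tmin(q,q-1)=3$, which I would note at the end to confirm that the general formula specializes correctly to the already-known boundary cases.
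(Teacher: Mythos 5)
Your proof is correct, and it takes a genuinely cleaner route than the paper's. Both arguments ultimately rest on the same extremal configuration --- the family of all $c$-colorings avoiding one fixed pair $\set{a,b}$ --- but you arrive at it differently. The paper's proof (which is only written out for $c=3$) is a sequential ``worst-case drawing'' argument: it builds the bad family greedily, first taking all $\binom{q-1}{c}$ colorings inside a $(q-1)$-subset, then all $\binom{q-2}{c-1}$ colorings containing the excluded element $\alpha$ but avoiding a fixed $\beta$, and asserts without justification that this scenario is the worst case. Your version replaces that assertion with a rigorous two-sided argument: the lower bound comes from exhibiting the non-covering family of all colorings avoiding $\set{a,b}$, whose size you compute in one shot by complementary counting, $\binom{q}{c}-\binom{q-2}{c-2}=\binom{q-1}{c}+\binom{q-2}{c-1}$ (via Pascal's rule, and this is consistent with the paper's two-chunk count, since a coloring avoids the pair iff it omits $\alpha$, or contains $\alpha$ and omits $\beta$); the upper bound follows by pigeonhole, since any profile missing some pair consists entirely of colorings avoiding that pair, of which there are only $\binom{q-1}{c}+\binom{q-2}{c-1}$ distinct ones. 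What your approach buys is exactly what the paper's proof lacks: maximality of the non-covering family is proved rather than asserted, the argument is uniform in all $c\geq 2$ rather than tailored to $c=3$, and the sanity check against Corollary \ref{cor:achieving_2_or_q-1} confirms the specializations $\Tmin(q,2)=\binom{q}{2}$ and $\Tmin(q,q-1)=3$.
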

\begin{proof}[Proof for $c=3$]
    We are looking for the minimum number $\Tmin(q, c)$ of distinct $c$-colorings drawn uniformly at random from $\setcoloring(q, c)$. By Theorem \ref{thm:covering_pairs}, $t$ distinct $c$-coloring channels over $\alphabet_q$ can reconstruct a sequence $\bfx \in \alphabet_q^n$ if and only if every pair $(a, b) \in \alphabet_q^2$ is contained in at least one of the $t$ colorings. To prove the statement, we draw one $c$-coloring after the other with a uniform distribution. At each step, we count how many new pairs have been added. This iteration is continued until all $\binom{q}{2}$ pairs are covered by at least one coloring. Due to random drawing, we need to consider the worst case scenario or pairs that are added at each step. For instance, any coloring we chose as first $c$-coloring $I_1$ covers $\binom{c}{2}= 3$ pairs in $\alphabet_q$. We now chose a second $c$-coloring $I_2$. In the worst case, we have $I_1 \cap I_2 = q-1$ or equivalently the number of new pairs added is $c-1$.

    Continuing this procedure, we observe that the worst case scenario consists in choosing all colorings of an alphabet $\mathcal{B} \in \alphabet_q$ of size $q-1$, that is $\card{\setcoloring(q-1, c)} = \binom{q-1}{c}$ many. At this point, only $q-1$ pairs are left to be covered and only $\binom{q-1}{c-1}$ colorings are left to be chosen all of which containing the element $\alpha := \alphabet_q \setminus \mathcal{B}$. Here we continue with the same idea, by adding all colorings that do not contain one last pair, say $(\alpha, \beta)$ for some $\beta \in \mathcal{B}$. Hence only colorings containing the elements $\alpha$ and $\beta$ are left. Adding any one of them covers the last pair and no more colorings are needed. To sum up, we have a worst case scenario of $\Tmin(q, c) = \binom{q-1}{c} + \binom{q-2}{c-1} + 1$ colorings.
\end{proof}

\section{Conclusion}\label{sec:concl}
We introduced an error-free channel model based on subsets of a given size $c$ over a $q$-ary alphabet, called $c$-colorings. This channel takes as an input a $q$-ary sequence $\bfx$ of given length $n$ and outputs subsequences of $\bfx$ consisting only of the symbols lying in the $c$-colorings considered. For given parameters $c, q$ and the number $t$ of distinct colorings considered, we discussed the information rate and capacity of the corresponding channel model. We proved that maximum information rate is achieved if and only if the $c$-colorings related to the channel form a $(q, c, 2)$-covering design. This allowed us to connect the minimum size for which a coloring profile achieves maximum information rate to the minimum covering number of a $(q, c, 2)$-covering design. Additionally, we gave an expression for the minimum size $\Tmin(q,c)$ for which any coloring profile of size $\Tmin(q,c)$ achieves maximum information rate.

\section*{Acknowledgment}
The authors thank Amit Meller for helpful discussions and introducing them with the problem of single protein nanopore sensing, which was the biological motivation to the information theoretical model studied in this work. 

This work was funded by the European Union (DiDAX, 101115134). Views and opinions expressed are however those of the authors only and do not necessarily reflect those of the European Union or the European Research Council Executive Agency. Neither the European Union nor the granting authority can be held responsible for them.

\bibliographystyle{IEEEtran}
\bibliography{biblio}

\end{document}